\newcommand{\cmark}{\ding{52}}
\newcommand{\xmark}{\ding{56}}
  \providecommand\BibTeX{{%
    \normalfont B\kern-0.5em{\scshape i\kern-0.25em b}\kern-0.8em\TeX}}}
\lstdefinelanguage{json}{
    basicstyle=\footnotesize\ttfamily,
    numbers=left,
    numberstyle=\scriptsize,
    stepnumber=1,
    numbersep=8pt,
    showstringspaces=false,
    breaklines=true,
    frame=lines,
    backgroundcolor=\color{gray!10},
    literate=
     *{0}{{{\color{orange}0}}}{1}
      {1}{{{\color{orange}1}}}{1}
      {2}{{{\color{orange}2}}}{1}
      {3}{{{\color{orange}3}}}{1}
      {4}{{{\color{orange}4}}}{1}
      {5}{{{\color{orange}5}}}{1}
      {6}{{{\color{orange}6}}}{1}
      {7}{{{\color{orange}7}}}{1}
      {8}{{{\color{orange}8}}}{1}
      {9}{{{\color{orange}9}}}{1}
      {:}{{{\color{purple}:}}}{1}
      {,}{{{\color{purple},}}}{1}
      {\{}{{{\color{blue}\{}}}{1}
      {\}}{{{\color{blue}\}}}}{1}
      {[}{{{\color{blue}[}}}{1}
      {]}{{{\color{blue}]}}}{1},
    breakatwhitespace=false,
    breakautoindent=true,
    postbreak=\mbox{\textcolor{red}{$\hookrightarrow$}\space},
    escapeinside={(*@}{@*)}  
}
\begin{document}

\title{How to Train your Antivirus: RL-based Hardening through the Problem Space}

\author{Ilias Tsingenopoulos}
\authornote{Equal contribution}
\affiliation{%
  \institution{DistriNet, KU Leuven}
  \country{Belgium}
}
\author{Jacopo Cortellazzi}
\authornotemark[1]
\affiliation{%
  \institution{King's College London\\University College London}
  \country{United Kingdom}
}
\author{Branislav Bošanský}
\affiliation{%
  \institution{Gen Digital}
  \country{Czech Republic}
}
\author{Simone Aonzo}
\affiliation{%
  \institution{Eurecom}
  \country{France}
}
\author{Davy Preuveneers}
\affiliation{%
  \institution{DistriNet, KU Leuven}
  \country{Belgium}
}
\author{Wouter Joosen}
\affiliation{%
  \institution{DistriNet, KU Leuven}
  \country{Belgium}
}
\author{Fabio Pierazzi}
\affiliation{%
  \institution{King's College London}
  \country{United Kingdom}
}
\author{Lorenzo Cavallaro}
\affiliation{%
  \institution{University College London}
  \country{United Kingdom}
}

\renewcommand{\shortauthors}{Tsingenopoulos, et al.}

\begin{abstract}
ML-based malware detection on dynamic analysis reports is vulnerable to both evasion and spurious correlations. In this work, we investigate a specific ML architecture employed in the pipeline of a widely-known commercial antivirus, with the goal to harden it against adversarial malware.
Adversarial training, the most reliable defensive technique that can confer empirical robustness, is not applicable out of the box in this domain, for the principal reason that gradient-based perturbations rarely map back to feasible problem-space programs. 
We introduce a novel Reinforcement Learning approach for constructing adversarial examples, a constituent part of adversarially training a model against evasion. Our approach comes with multiple advantages. It performs modifications that are feasible in the problem-space, and only those; thus it circumvents the inverse mapping problem. It also makes it possible to provide theoretical guarantees on the robustness of the model against a well-defined set of adversarial capabilities. Our empirical exploration validates our theoretical insights, where we can consistently reach 0\% Attack Success Rate after a few adversarial retraining iterations.
\end{abstract}

\maketitle

\acrodef{ML}[ML]{machine learning}
\acrodef{DL}[DL]{deep learning}
\acrodef{AI}[AI]{artificial intelligence}
\acrodef{RL}[RL]{reinforcement learning}
\acrodef{AML}[AML]{adversarial machine learning}
\acrodef{MDP}[MDP]{Markov Decision Process}
\acrodef{DNN}[DNN]{deep neural network}
\acrodef{RNN}[RNN]{recurrent neural network}
\acrodef{SVM}[SVM]{Support Vector Machine}
\acrodef{RL}[RL]{reinforcement learning}
\acrodef{NLP}[NLP]{natural language processing}
\acrodef{LLM}[LLM]{large language models}
\acrodef{GAN}[GAN]{generative adversarial network}
\acrodef{IID}[IID]{independent and identically distributed}
\acrodef{OOD}[OOD]{out-of-distribution}
\acrodef{HW}[HW]{Hardwaere}
\acrodef{SW}[SW]{Software}
\acrodef{AE}[AE]{Adversarial Example}
\acrodef{FPS}[FPS]{frames per second}
\acrodef{PPO}[PPO]{Proximal Policy Optimization}
\acrodef{HMIL}[HMIL]{Hierarchical Multiple Instance Learning}
\acrodef{wrt}[\emph{w.r.t.}]{with respect to}
\acrodef{st}[\emph{s.t.}]{such that}

\section{Introduction}
Adversarial examples have been extensively explored in numerous domains. While the fundamental challenge they pose to learning is fascinating on its own scientific accord, they also represent a significant threat to the widespread adoption of machine learning solutions in security-critical scenarios.
At the same time, there is an eagerness by many cybersecurity vendors to adopt deep-learning approaches for malware detection and classification, for two main reasons.
First, adversarial examples have existed in security domains long before they emerged in \ac{AI}, as straightforward ways for adversaries to adapt to signature- and rule-based decision making---for instance variants from the same malware family.
Secondly, while rule-based decisions can claim near zero false positives, they perform poorly in distributional shifts~\cite{upadhyay2021towards}, which are further intensified under adversarial agency.
Hence, the promise that \ac{AI} brings is to learn general and ideally robust rules of inference in the presence of distributional shifts, both naturally and adversarially induced.

Adversarial malware examples have been predominantly explored in the feature space and on static analysis features.
With the prevalence of obfuscation, polymorphic strains, and packing, static analysis is becoming less relevant for malware detection~\cite{aghakhani2020malware}.
Behavioral analysis can be a very effective tool as it circumvents the challenges of obfuscated and packed binaries by exposing their runtime behavior, considered a reliable fingerprint of malicious intent.
The representation of this behavior can range from API call sequences~\cite{tian2010differentiating, rosenberg2018generic}, to fine-grained modeling of process behavior~\cite{continella2016shieldfs}, and fully textual reports~\cite{mandlik2022jsongrinder}.
As dynamic analysis tools gain traction and widespread adoption, novel vulnerabilities are introduced: adversarial malware behavior that evades detection while preserving the intended functionality.
Notably, this behavior can be learned as a direct consequence of dynamic analysis itself, the very process that is supposed to thwart it.

In other domains, for example image classification, the state-of-the-art defense against evasion is adversarial training \cite{madry2017towards, tramer2020adaptive}.
We can think of adversarial training as a process of generating counterfactuals to the data the model is training on: if these features were different under certain constraints, would that still be the same example?
While the most effective approach for model hardening, adversarial training remains largely empirical: it is still intractable to provide guarantees on the operational robustness for any but the most trivial models.
Furthermore, it is not directly applicable in domains that exhibit a large problem- to feature-space gap~\cite{pierazzi2020intriguing}.
In dynamic analysis-based malware detection, there are several degrees of separation from the original program to the decision of the model.
Not only it is challenging to map any gradient-based perturbations computed on the feature representation back to \textit{viable} programs, but as we show in this work it is also irrelevant when defending against real-world attackers.

The imperceptible perturbation model of adversarial examples bears little resemblance to adversarial malware, which always manifest in the problem space and under a very different set of constraints.
As adversarial examples are essentially out-of-distribution, it is impractical to proactively harden a model against \textit{all} possible variations---especially in domains where they are not constrained by similarity to the original ones.
It is well established that adversarial training can reduce the performance on clean samples; a lesser known property is that it can also reduce the robustness to attacks \textit{other} than the one it was performed with~\cite{gilmer2019a}.
As hardened models are not universally robust but instead biased towards specific kinds of distributional shifts, adversarial training has the potential to harm robustness in real-world settings.
It then follows that in the domain of malware detection, adversarial training is important to be performed with the \textit{full range} of attacks that are \textit{possible} through the problem space---and that the threat analysis has rendered as such---and possibly \textit{only} against those.

In this paper, we make the following contributions:
\begin{itemize}
    \item We first obtain a wide range of Windows binaries spanning from 2012 to 2020, from various malware families to verified goodware; then we execute them in CAPEv2 to generate dynamic analysis reports. To foster research on dynamic analysis, we release our dataset in the following URL: https://paperswithcode.com/dataset/autorobust
    \item We justify theoretically how, under certain conditions, adversarial training through the problem-space is advantageous to gradient-based approaches like PGD~\cite{madry2017towards}, and define the concise set of transformations that are available for behavioral representations.
    \item We introduce AutoRobust, a novel RL-driven methodology for performing adversarial training and observe that, within several attack iterations followed by retraining, the model consistently moves from entirely vulnerable ($\sim$100\% ASR) to fully robust ($\sim$0\% ASR). Notably, AutoRobust does not assume anything about the underlying model or the adversarial capabilities apart from them being functionality-preserving. Thus, it is applicable to \textit{any} problem-space transformations that related works investigate.
    \item The key takeaways of our work are the ability of our approach to bypass an ML component of a commercial antivirus, as well as the importance of incorporating adversarial training through the problem-space. To enable reproducibility and follow-up work, we open-source our codebase: https://github.com/s2labres/AutoRobust
\end{itemize}
\section{Background \& Related Work}
\label{sec:background}
In this section we discuss prior work on the domain and identify open problems and research gaps, setting in this way the ground for our approach.

\subsection{Malware detection}
Prior research has employed a wide range of classifiers to identify malicious software, in diverse contexts~\cite{abusitta2021malware}. 
On Windows, the context we scope to, related work has operated on representations ranging from the whole binary \cite{raff2018malware}, to control flow graphs \cite{alasmary2020soteria}, and static analysis \cite{christodorescu2003static}.
Static analysis however has clear limitations in malware detection and often dynamic analysis is employed as a complementary approach.
Recently, it has been demonstrated that ML models built solely on static analysis are not able to distinguish between benign and malicious samples that are packed \cite{aghakhani2020malware}.
This work reviews a wide range of ML models built on static analysis and concludes that they detect packing rather than malice.
Given encryption, polymorphic strains, and packing, the efficacy of static analysis or signature-based detection deteriorates.

To capture the actual behavior of a program, researchers and analysts have relied on dynamic analysis \cite{cornelissen2009systematic}.
Today there are multiple vendors that provide sandboxes for dynamic analysis; as enterprise or cloud solutions but also open source like CAPEv2 \cite{CAPEv2,CUCKOO,DRAKVUF}, used both in academia and in industry.
Sandboxes typically generate a structured textual description of the full program behaviour, from files accessed and API sequences~\cite{rosenberg2018generic,rosenberg2021sequence,pennington2014glove} to any malicious or evasive techniques used~\cite{trinius2009malware,galloro2022systematical,maffia2021longitudinal}.
Malware detection has historically relied heavily on signatures and indicators of compromise, with the evident shortcoming in the inability to generalize to new strains or zero-day malware.
To thwart such threats, analysts require models that can extrapolate from the observed behavior of existing malware to new samples.

However, going through dynamic analysis reports is a highly involving and time-consuming process, so given the immense amount of data that can be generated there is a great incentive to automate this process.
Learning directly from raw data inputs by obviating the need for feature engineering has been very effective in numerous applications; manual feature engineering is not only labor-intensive, but also prone to introducing human bias.
By processing the raw input this occurs, ML models have the capacity to construct important features for inference and provide further insights to experts, beyond any a priori notions on what constitutes malware behavior.
To this end, recent work has introduced an approach that fully automates this process, named \ac{HMIL}~\cite{mandlik2022jsongrinder,pevny2017using}.
Its relevance and significance lie in the fact that it can be directly applied to the results of dynamic analysis, as it renders structured text reports of arbitrary size end-to-end differentiable.
This makes it possible to be integrated with any ML pipeline, by enabling learning on JSON data in their raw form that can span hundreds of thousands of tokens, a challenge even for the largest of LLMs.

\subsection{Problem-Space Attacks}
While adversarial attacks have been investigated in a multitude of domains~\cite{biggio2018wild}, on malware they were initially explored in the feature-space~\cite{grosse2017adversarial, hu2017generating, stokes2018attack}.
However there is a growing body of works that performs attacks on the full processing pipeline, even up to perturbing the original binary \cite{anderson2018learning, pierazzi2020intriguing, labaca2021realizable, tsingenopoulos2022adaptive,  galloro2022systematical, park2020survey}, using approaches like process chopping \cite{ispoglou2016malwash}, return-oriented programming \cite{ntantogian2019transforming}, bytecode modifications \cite{burr2021improving}, and opcode n-grams \cite{li2020adversarial}.
Furthermore, adversarial malware can be generated on other models and still transfer to (that is successfully evade) a malware classifier under consideration~\cite{demontis2019adversarial}.
These works illustrate how a brittle feature representation can render a model vulnerable to evasion, especially when the model does not process the actual program behavior.
Performing attacks through the problem space and the ensuing constraints that this imposes is an accurate representation of the actual level of threat; an important step as only a realistic assessment of that threat can build effective defenses.

In domains like malware, adversarial attacks are much more difficult to carry out, precisely because of the problem-feature space gap.
In real-world settings, problem and feature spaces are well apart: any approach that computes perturbations in the feature space is always confounded by the task of mapping them back to the problem space in a feasible manner~\cite{sheatsley2020adversarial}.
In the malware domain specifically, the non-invertibility of the mapping from features to programs, the preservation of functionality, and side-effects on the feature representation when valid perturbations are made in the problem space, make this task non-trivial.
To that end, several previous works focus on the problem-space generation and verification of functionality-preserving adversarial malware~\cite{demetrio2021adversarial, demetrio2021functionality, labaca2021aimed}.
In constructing adversarial malware it would therefore be advantageous if the agency that generated them was situated \textit{directly} on the problem space. 

\subsection{Defenses \& Mitigations}
While ML-based malware detection is typically one of the various modules for detection a commercial antivirus solution employs~\cite{botacin2022antiviruses}, the defensive aspect has received relatively less attention compared to the offensive one.
Approaches range from feature-space hardening \cite{galovic2021improving}, to combining distillation with adversarial training~\cite{rathore2021robust}, to defenses tailored to API call-based models~\cite{rosenberg2021sequence}.
In the defensive literature several important principles have been investigated, from using model ensembles to the necessity of adversarial training~\cite{li2021framework}, however such defenses are predominantly performed on the feature representations that become inputs for the ML models, or without consideration to problem space constraints.

In the long history of proposed and eventually broken defenses against adversarial examples~\cite{tramer2020adaptive}, mainly one has stood the test of time: adversarial training~\cite{madry2017towards}.
The objective of adversarial training is to generate adversarial examples during training: given dataset $D = {(x_i, y_i)}^{n}_{i=1}$ with classes $C$ where $x_i \in \mathbb{R}^d$ is a clean example and $y_i \in {1,..., C}$ is the associated label, the goal is to solve the following \emph{min-max} optimization problem:

\begin{equation}
    \underset{\phi}{\operatorname{min}} \mathbb{E}_{x_i, y_i\sim D} \underset{\Vert \delta_i \Vert_{L_p} \leq \epsilon}{\operatorname{max}} \; \mathcal{L}(h_{\phi}(x_i + \delta_i), y_i)
\label{eqn:adv_train}
\end{equation}

\noindent where $x_i + \delta_i$ is an adversarial example of $x_i$, $h_\phi : \mathbb{R^d} \rightarrow \mathbb{R^C}$ is a hypothesis function and $\mathcal{L}(h_\phi(x_i + \delta_i), y_i)$ is the loss function for the adversarial example $x_i + \delta_i$.
The inner maximization loop finds an adversarial example of $x_i$ with label $y_i$ for a given $L_p$-norm (with $L_p \in \{0,1,2,\inf\}$), such that $\Vert \delta_i\Vert_{l} \leq \epsilon$ and $h_\phi(x_i + \delta_i) \neq y_i$.
The outer loop is the standard minimization task typically solved with stochastic gradient descent.

A widely accepted principle in robustness to distributional shift literature~\cite{lanckriet2002robust, geirhos2018generalisation} is that the lack of robustness can be largely attributed to spurious correlations, that is the model latching onto superficial information and artifacts manifesting as causal links between the dependent and independent variables.
Analogously to computer vision, in the malware domain this information can appear unintuitive or downright irrelevant to us, for instance the specific filenames used.
Yet such artifacts can still prove useful for generalization \textit{within} the \ac{IID} settings, as these are defined by the currently available data.

Adversarial training is, in essence, an exploration on \ac{OOD} variations of these data, under a set of constraints.
As such, if this exploration is performed without regard to what variations are possible and which constraints are there, it is difficult to ascertain what the effects on the clean and robust accuracy will be.
Previous works have shown that undisciplined adversarial training can provide limited robustness or even make the model more vulnerable to other transformations or constraints~\cite{hendrycks2019benchmarking, dyrmishi2023empirical}.
If we look beyond $\ell_p$ norm perturbations and take a broader view on robustness, it becomes clear that one cannot claim that adversarially trained models are robust \emph{in general}, but rather biased towards \textit{specific} distortions as these are generated by specific adversarial capabilities. 
It is entirely conceivable then that adversarial training can harm not only performance but robustness too in real-world settings.

If adversarial training has lackluster performance or even the potential to diminish robustness when performed irrespective of transformations, then it should be performed under those and \textit{only} those that one wants to defend against.
In the context of dynamic analysis, this would elicit an exhaustive search to identify the concise set of transformations an adversary can make in the problem space.
As dynamic analysis documents program behavior as it is expressed over time, this set of transformations can result to a very diverse set of possible adversarial examples, especially if one does not constrain the perturbation amount.

\subsection{Research Gap}
While adversarial training remains the most promising path to explore in defenses, the meaningfulness of $L_p$-norms and $\delta$ perturbations as constraints applies mostly in the computer vision domain, where visual similarity is important and where the majority of adversarial attacks and defenses were initially researched and performed.
Every other domain, including computer vision but in real-world settings~\cite{sharif2019general}, will have slightly different to completely unrelated set of constraints.
Generating adversarial malware in particular, demonstrates two key idiosyncrasies compared to computer vision.
Modifications are more difficult to perform, as there are hard constraints stemming from the preservation of the semantics and the original program functionality.
However, only soft constraints apply on the total amount of modifications, as there is no strict requirement of resemblance to the original binary.
Previous work has already explored adversarial training on malware classifiers and under such constraints~\cite{lucas2023adversarial}.
They however focus on hardening models that operate on the raw-byte level of a binary, specifically the MalConv and AvastNet architectures.

In this work we focus on hardening a model that operates on reports generated by executing the malware in a sandbox and generating a dynamic analysis report.
In this context then, the natural form that data occur in is textual; for this representation adversarial training, as is conventionally performed, can not be applied out of the box due to the following reasons:
\begin{itemize}
    \item White box techniques, like FGSM and PGD, maximize the inner loop of the loss function \eqref{eqn:adv_train} during training: this often results in feature-space perturbations that do not map back to valid program behavior.
    \item Dynamic analysis reports exhibit a huge variance in size (from ten tokens to hundreds of thousands), and the possible modifications on them are in theory unbounded. As not all possible adversarial examples are interesting or useful, it is better to reground the problem and make the problem tractable by focusing on those adversarial examples that are \textit{feasible} through the problem space.
    \item Adversarial training remains an empirical hardening technique. It would be preferable if through a rigorous threat analysis we could show that against a specific and representative set of problem-space modifications, we could achieve a guaranteed level of robustness.
\end{itemize}

\section{Threat Model}
\label{sec:threat}

The methodology we conceive and develop in this work is a general framework for hardening ML-based classifiers $\mathcal{M}$ against adversaries that employ well-defined problem-space capabilities $\mathcal{C}$ for evasion.
Under this formulation, it is independent of the underlying model $\mathcal{M}$ and capabilities $\mathcal{C}$, and can thus accommodate a wide variety of them.
In this section, we explicit our assumptions before we proceed to the concrete details of the design and our empirical evaluation.
The threat model we consider focuses on the dynamic analysis report representation and is delineated as follows:

\begin{itemize}
    \item[--] \textbf{Assets:} Trained and deployed model $\mathcal{M}$ with corresponding weights ${w}$.
    \item[--] \textbf{Adversaries:} Malware coders or malware-as-a-service (MaaS) actors.
    \item[--] \textbf{Knowledge:} Given training data $\mathcal{D}$, feature representation $\mathcal{F}$, learning algorithm $g$, model $\mathcal{M}$, and explainer $\mathcal{E}$, we assume a Perfect Knowledge attacker that knows all $\theta_{PK} = (\mathcal{D},\mathcal{F},g,\mathcal{M}, \mathcal{E})$ \cite{biggio2018wild}.
    \item[--] \textbf{Goal:} Generate adversarial malware, that is render the malware binary evasive with respect to the model $\mathcal{M}$ while preserving its original functionality.
    \item[--] \textbf{Capabilities}: Make transformations on the binary that are reflected in the dynamic analysis report, as these are described in \autoref{sec:trans}. No dataset poisoning capabilities.
    \item[--] \textbf{Defenses:} Use our framework to generate report variants for adversarially training the model $\mathcal{M}$.
\end{itemize}

In our threat model, assuming the perspective of the adversary, the objective is to produce an execution trace that would create a dynamic analysis report that is not detected as malicious by the target model $\mathcal{M}$.
While this can be achieved by modifying the source code of the program in ways that preserve the original functionality~\cite{ming2017impeding}, we consider this capability out of scope.
A more representative and realistic threat is the attacker modifying the compiled binaries directly; we elaborate on how this can be achieved in Section \ref{sebsec:binaries}.
On the other hand, the defender's objective is to increase the robustness of model $\mathcal{M}$ to evasion, but similarly they have no access to the source code so all transformations are assumed to be performed on the compiled binary.
This implies that all possible transformations $\Lambda$ on the original binary will map to a subset of all the possible transformations in the feature representation $\Phi$ (see Fig. \ref{fig:grad}); the adversary cannot modify the representation in the input space of the model $\mathcal{M}$ at will.
\section{AutoRobust}
In this section we present AutoRobust, our general framework and methodology to harden ML-based models against problem-space attacks.
Given known adversarial capabilities, we theoretically justify how our approach is preferable to gradient-based white-box hardening, and also demonstrate it empirically.

\begin{figure*}[ht]
\centering
\includegraphics[width=1.6\columnwidth]{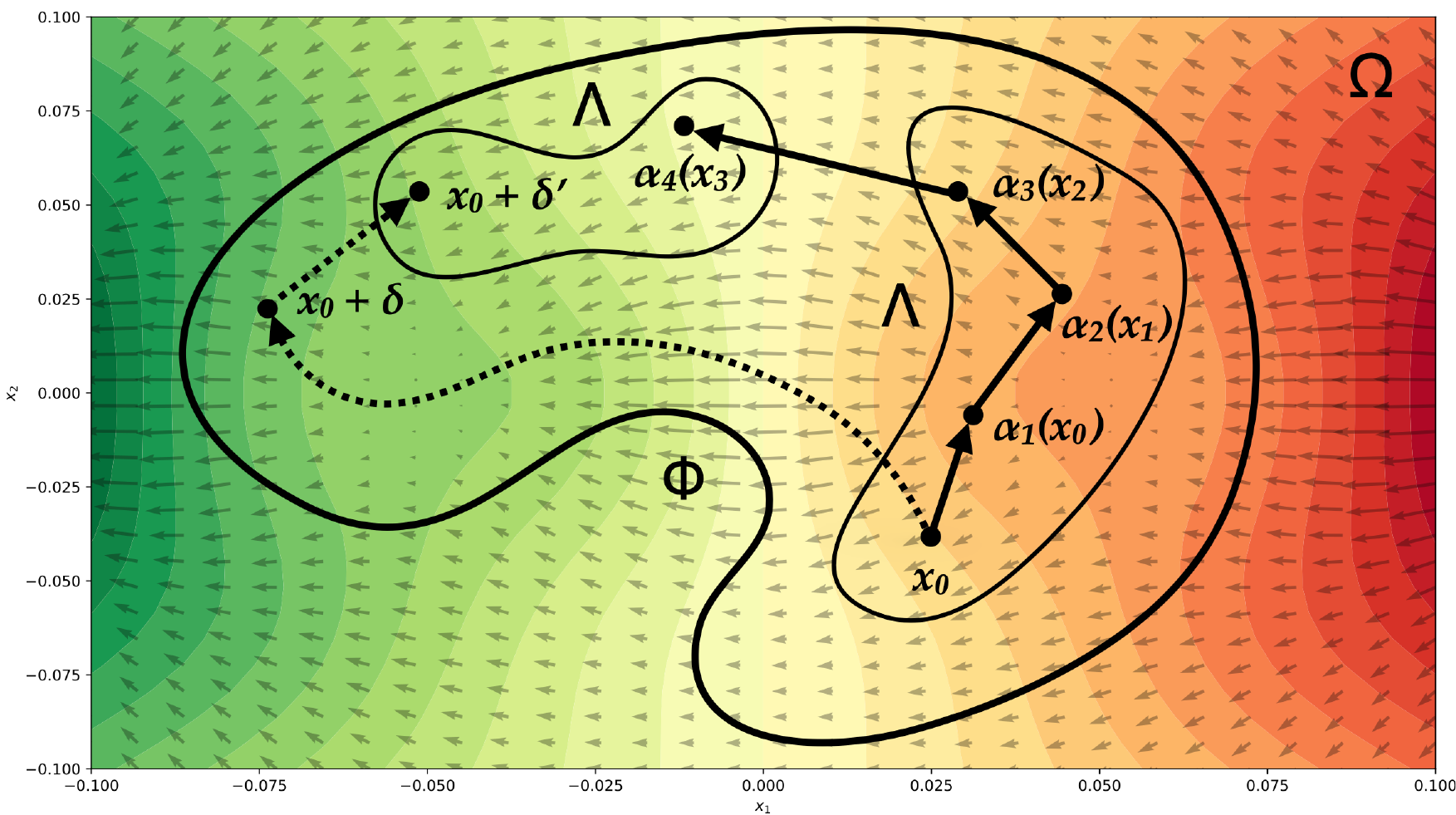}
\caption{\textbf{Comparison between traditional gradient-based attacks and AutoRobust.} The dotted path shows a typical gradient-based attack: first perturbing to $\mathbf{x} + \boldsymbol{\delta}$ then projecting to $\mathbf{x} + \boldsymbol{\delta}^\prime$ in the \emph{feasible} problem-space $\Lambda$. Our approach (dense path) that employs transformations $\alpha_t$ in succession, moves by definition \textit{only} within the feasible problem space $\Lambda$. The background displays a gradient field over the value of the discriminant function $h(\mathbf{x})$, with negative values (green) for the target class. The thick solid area $\Phi$ represents the feasible feature-space, while the areas denoted by $\Lambda$ represent the feasible problem-space mapped to $\Phi$.}
\label{fig:grad}
\end{figure*}

\subsection{Approach}
Let $\Omega$ be the latent space of a model $\mathcal{M}$ where inputs are mapped, for example the last layer before output.
During training, the model learns such a latent space implicitly, with benign and malicious areas being well separated.
Gradient-based attacks compute perturbations by defining an adversarial loss on the model and then backpropagating through it.
With  $\Phi \subset \Omega$ we represent all the possible representations in the \textit{feature-space}.
While every input sample $x$ maps to a point in $\Phi$, not every point in $\Phi$ maps to a valid input sample: the mapping that model $\mathcal{M}$ performs is injective but not surjective.
We denote the space of valid input samples as $\Lambda \subset \Phi$.
Consequently, following gradient-based perturbations will inadvertently result to points that are not in $\Lambda$ and thus invalid, something that will require more computation to correct by mapping them back to $\Lambda$, and often deviating from the gradient direction itself.

Consider now an alternative to gradient-based perturbations: problem-space modifications from a capability set $\mathcal{C}$ that by definition produce only valid programs.
Then the resulting reports are \textit{guaranteed} to fall within $\Lambda$.
If we denote as $S$ the embedding in $\Lambda$ of an original malware sample, then making a transformation $T \in \mathcal{C}$ will produce a valid sample $S'$ that is always in $\Lambda$.
Then we can construe the adversarial task as a \ac{MDP} to be solved: if we take the latent subspace $\Lambda$ as the state space, and the transformations $\alpha_t$ as the action space, the goal is to sequentially apply transformations until the model $\mathcal{M}$ is evaded.
Notably, this process can be optimised in gradient-based manner too \cite{sutton1999policy}, while always staying on the valid program space.
An intuitive illustration of our approach is included in Figure \ref{fig:grad}.

In the domain of adversarial malware, the transition from gradient-based methodologies like PGD~\cite{madry2017towards} or C\&W~\cite{carlini2017towards} to problem-space attacks like GAMMA~\cite{demetrio2021functionality} and AIMED-RL~\cite{labaca2021aimed} guarantees the preservation of semantics and functionality.
AutoRobust, which as a framework is independent of the underlying modifications performed, comes with an additional benefit: if an agent with capability set $\mathcal{C}$ cannot find a feasible path to the adversarial side of the decision boundary of $\mathcal{M}$, then this state $S^*$ is either unreachable under $\mathcal{C}$, or does not exist, and thus the agent is unable to evade the model.
In this manner we can show that for a number of samples with starting states $s_0$, the model $\mathcal{M}$ is robust to the specific $\mathcal{C}$ with probability $p$, denoted as p-Robust.

\begin{theorem}[Problem-Space p-Robustness]
Given model $\mathcal{M}$ and adversary $\mathcal{V}$ with problem-space capabilities $\mathcal{C}$, $\mathcal{M}$ is robust to problem-space evasion with probability $p$ if and only if the expected reward of the optimal policy $\pi^*(a|s)$ in the corresponding MDP is $1-p$.
\label{th:pspr}
\end{theorem}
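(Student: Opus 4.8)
The plan is to prove the statement as an equality that chains through the definitions of the MDP reward and of $p$-Robustness, so that both directions of the biconditional collapse into a single computation. First I would fix the reward structure that makes the MDP correspond to the evasion task: with state space $\Lambda$, action space given by the capability set $\mathcal{C}$, and deterministic transitions $s' = T(s,a)$ for $T \in \mathcal{C}$ that by construction remain in $\Lambda$, I would assign a terminal reward of $1$ whenever the reached state lands on the adversarial side of the decision boundary of $\mathcal{M}$ (i.e.\ the agent reaches some $S^*$ with $h(S^*)$ in the target halfspace), and $0$ otherwise. Episodes terminate on successful evasion or once the capability budget is exhausted, and the return is undiscounted, so a successful trajectory contributes exactly $1$ regardless of its length.

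The key step is to characterise the optimal value function pointwise. For a fixed initial state $s_0$, because transitions are deterministic and the reward is the binary success indicator, the optimal value $V^{\pi^*}(s_0)$ equals $1$ if and only if there exists a finite sequence of transformations in $\mathcal{C}$ leading from $s_0$ to an adversarial state, and $0$ otherwise. This is exactly the dichotomy already noted before the theorem: either a feasible path to $S^*$ exists under $\mathcal{C}$, or $S^*$ is unreachable (or nonexistent) and the adversary cannot evade. Hence $V^{\pi^*}(s_0)$ is the indicator that sample $s_0$ is evadable under $\mathcal{C}$. If one prefers to admit stochastic transitions or a stochastic policy, the same argument gives $V^{\pi^*}(s_0)$ equal to the probability that the optimal policy evades $\mathcal{M}$ starting from $s_0$.

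Next I would connect this to $p$-Robustness. By definition, $\mathcal{M}$ is robust with probability $p$ when, over the distribution of initial samples $s_0$ that induces the MDP's start-state distribution, the optimal adversary fails to evade with probability $p$; equivalently the attack success rate is $1-p$. Taking the expectation of the pointwise characterisation over the start-state distribution and applying the law of total expectation,
\begin{equation}
\mathbb{E}_{s_0}\!\left[V^{\pi^*}(s_0)\right] = \Pr_{s_0}\!\left[s_0 \text{ is evadable under } \mathcal{C}\right] = 1-p .
\end{equation}
Since this is an identity, the biconditional is immediate in both directions: the expected reward of $\pi^*$ equals $1-p$ precisely when the model is $p$-Robust.

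The main obstacle I anticipate is not the algebra but pinning down the reward and horizon conventions so that the expected optimal return is \emph{exactly} the evasion probability rather than a discounted or reward-shaped proxy. In particular, any discount factor $\gamma<1$ would make a $k$-step success worth $\gamma^k$ and break the clean equality, so I must justify the undiscounted episodic formulation (or, equivalently, a success reward independent of path length). I also need to ensure the optimal policy is well defined over $\Lambda$---that the maximisation over admissible transformation sequences is attained---and that the start-state distribution of the MDP faithfully matches the sample distribution against which $p$-Robustness is measured; these are the points where a careful statement of the MDP, rather than a new technical lemma, carries the argument.
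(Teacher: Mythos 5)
Your proposal is correct and follows essentially the same route as the paper's own proof: define a binary success reward for the evasion MDP, observe that the expected return of the optimal policy is exactly the evasion probability $P(adv)$, and identify $P(adv)=1-p$ with $p$-Robustness, giving both directions of the biconditional at once. If anything, your version is slightly more careful than the paper's (which averages over $N$ episodes rather than taking a true expectation and does not mention discounting), since you explicitly note that the undiscounted episodic formulation and the match between the start-state distribution and the sample distribution are what make the identity exact.
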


\begin{proof}
Let us assume a scalar reward for the adversary: 1 if they evade at time step $t > 0$ with the episode terminating afterwards, or 0 if they cannot evade up to a maximum time horizon $\mathcal{T}$.
The objective of the adversary is to maximize the reward of their policy $\pi_\theta^\mathcal{V}(a|s)$, parameterized by $\theta$, over a number of episodes $N$.
Then the expected value of the reward $\mathcal{R}$ is:

\begin{equation}
    \mathbb{E}_{\pi_\theta^\mathcal{V}}[\mathcal{R}] = \frac{1}{N}\sum_{i=1}^{N}\mathcal{R}
\label{eqn:expect}
\end{equation}

The optimal policy $\pi^*(a|s)$ is by definition the one that maximizes this expectation, and from the Policy Gradient Theorem \cite{sutton1999policy} we can compute it in a gradient-based manner.
If we bijectively map actions $a \in \mathcal{A}$ to transformations $T$ in the capability set $\mathcal{C}$, and states $s \in \mathcal{S}$ to the latent space $\Lambda$ of $\mathcal{M}$, then it is sound to construe this adversarial task as the corresponding MDP: transformations on current state $s$, deterministically lead to only one successor $s'$.
Additionally, the Markov Property is preserved as this transition depends solely on $s$ and transformation $\alpha_t \in T$. 
Now if we assume the value of the expectation is $1-p$, and given that $\mathcal{R} \in \{0,1\}$, from \eqref{eqn:expect} we can show that:

\begin{equation}
\begin{aligned}
    \frac{1}{N}\sum_{i=1}^{N}\mathcal{R} = \frac{1}{N}\sum_{i=1}^{N}(0 \cdot P(ben) + 1 \cdot P(adv)) = P(adv) = 1 - p
\label{eqn:expect2}
\end{aligned}
\end{equation}

It follows that the probability $P(ben)$ of the model $\mathcal{M}$ being robust to the adversary is $p$.
The converse is also straightforward to show: if we assume that a model is p-Robust, then for $\mathcal{V^*}$ an optimal adversary, $P(ben) = p$, and from \eqref{eqn:expect} the expectation is $1-p$.
\end{proof}

We have shown how one can use RL to find optimal transformation policies in the problem-space.
Afterwards, we proceed with the outer loop of Eq. \eqref{eqn:adv_train} unchanged: the newly discovered adversarial examples together with the original, clean ones are used to retrain the model.
This cycle is then repeated, where it becomes successively more difficult for the RL agent to construct adversarial examples.
While the degree of robustness is estimated on the validation set of the generated adversarial examples, these are generated under the same agent policy which means they will fall under the same distribution; thus they cannot be a representative measurement of the actual level of robustness.
The proper way to assess that is by training the agent anew in a hold-out test set.
Then given sufficient horizon to converge, the resulting optimal policy will indicate the level of p-Robustness, for this specific model $\mathcal{M}$ and this specific capability set $\mathcal{C}$.

\begin{figure*}[!t]
\centering
\includegraphics[width=1.7\columnwidth]{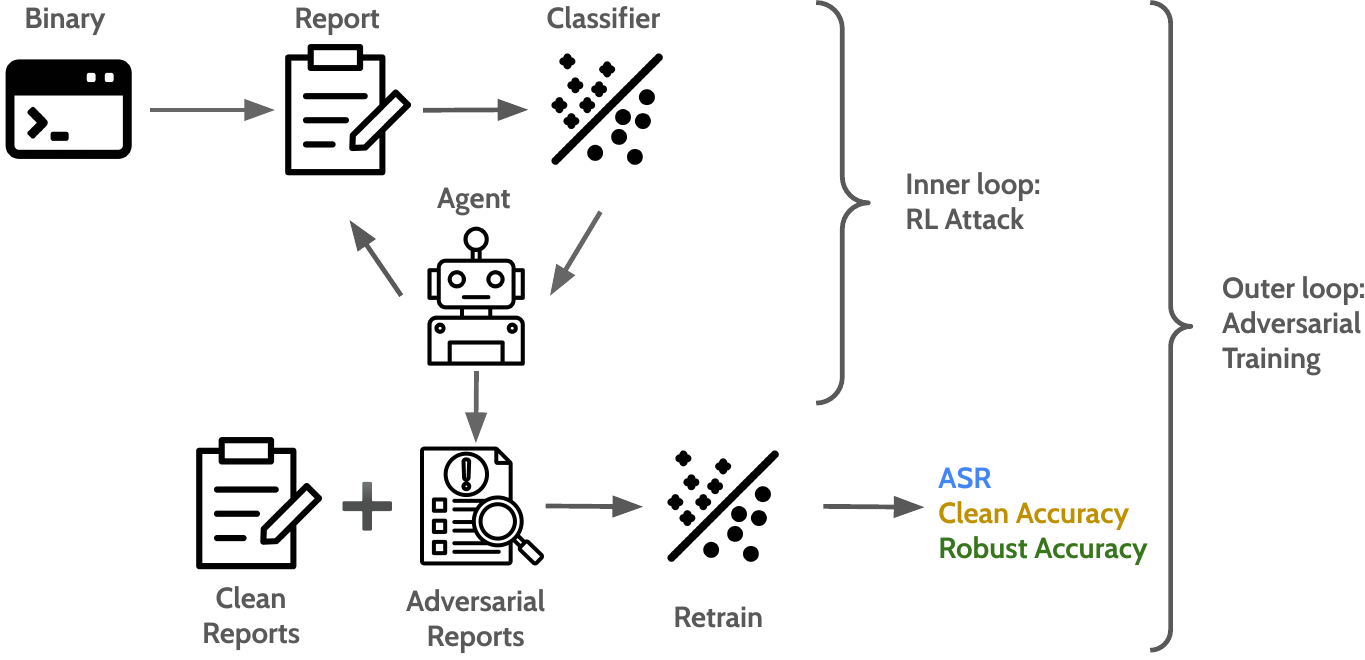}
\caption{Schematic depiction of the AutoRobust pipeline. In the inner loop, the RL agent attacks the model and generates adversarial reports. In the outer loop, the model is retrained with standard minibatch gradient descent on clean \textit{and} adversarial reports.}
\label{fig:autorobust}
\end{figure*}

\textbf{Generality.} It should be noted that Theorem \ref{th:pspr} does not assume anything about the discriminative model $\mathcal{M}$, while for the capability set $\mathcal{C}$ the only assumption we make is that transformations in this set are functionality preserving.
Then their application and composition will result in valid programs whose embeddings that are \textit{always} in $\Lambda$.
This indicates that our methodology is applicable to \textit{any} functionality preserving transformations in the problem-space; for instance to transformations directly on the binary that a growing literature investigates and performs~\cite{demetrio2021adversarial, demetrio2021functionality, labaca2021realizable, labaca2021aimed}.

\textbf{Hardening.} We assert that in order to defend against realistic adversarial attacks, i.e. those that can be carried out from the problem-space, it is insufficient to perform adversarial training in the feature-space and then show a degree of empirical robustness for some arbitrary perturbation budget.
This is because, unlike images, dynamic analysis reports can be of arbitrary size: adversarial training that does not take into account what an adversary \textit{can} do will inadvertently explore inconsequential spaces around the decision boundary that do not affect evasion.
Instead, as we will also demonstrate in our evaluation (Section \ref{sec:eval}), it is preferable to define the specific modifications that are feasible in the problem-space -- a result of thorough threat modeling -- and then construct adversarial examples with them.

Our observation is further corroborated through the results of Dyrmyshi et al.~\cite{dyrmishi2023empirical}, where they discover that in domains where a problem-feature space representation gap exists, realistic attacks can be much more effective in hardening models against evasion.
More concerningly, adversarial training with feature-space, unrealistic attacks has the capacity to actually make a model \textit{more} vulnerable to adversarial attacks, as Hendrycks et al. have demonstrated~\cite{hendrycks2019benchmarking}.
In conjunction, these two properties indicate an important implication for all security-critical domains: adversarial training should be performed with \textit{all} those and \textit{only} those transformations that are feasible, threatening, and can \textit{actually} be carried out through the problem-space.

\subsection{Explanation-guided Hardening}
\label{sec:explainer}
Conceptually, our approach can be viewed as a practical and automated method for identifying useful but non-robust features \cite{ilyas2019adversarial}.
Until now this was not straightforward to perform in domains where a distinct gap exists between problem and feature-space, as this identification requires an expert in-the-loop.
Such useful but brittle features are commonly described as shortcuts or spurious correlations~\cite{geirhos2020shortcut}.
By passing to the RL agent the feasible modifications and \textbf{only} those, the agent will optimize towards changing the decision of the model, in essence discovering counterfactuals to what the model has learned: from low order features, e.g. specific filenames, to higher order concepts, e.g. relative size between features~\cite{kim2018interpretability}.

This exploration is intractable to be exhaustively performed in continuous or unconstrained in size domains; it can however become more efficient when some bias is introduced, for instance towards the most important features for classification~\cite{casper2022robust}.
Our task in this work is to harden against evasion an end-to-end differentiable ML model, known as \ac{HMIL} \cite{pevny2017using}, which ingests and classifies raw dynamic analysis reports.
This model is accompanied by an explainability tool $\mathcal{E}$, specifically built for classifiers trained on raw hierarchical structured data~\cite{pevny2022explaining}.
The explanations returned by this tool consist of the minimal subset of input, in our case a sub-tree of the report, that receives the same classification as the complete input sample.
The process of acquiring an explanation, given a sample to classify, consists of two phases: sub-tree ranking and sub-tree searching.
First, all sub-trees are heuristically ranked based on their importance for the final classification.
Then, a minimal sub-tree of the sample is searched such that its evaluation by the model remains within a threshold $\tau$ of the original.
The end goal is to identify and distill the essential parts of the sample that influence the classification decision.

After this process concludes, a minimal set of entries from the dynamic analysis report is returned, something that is instrumental in our counterfactual investigation.
As the number of entries in these reports can be arbitrarily large (in the hundreds of thousands), we are interested in identifying the most relevant and influential ones.
Then a policy for modifying these reports can be enhanced through these explanations by precisely pointing towards what, according to the model under attack, is important for its decision.
Note however that an entry pointed to by the explainer might not be modifiable in a way that it preserves the original functionality of the binary; AutoRobust performs only those modifications that preserve the functionality.
After a sample has evaded detection, or a modification limit has been reached, it is added in the pool of adversarial examples.
From there, to make the model robust to evasion and spurious correlations, we introduce these counterfactual examples to the training set -- with their correct, original label -- and together with the clean samples we perform adversarial retraining of the model.
\section{Methodology}
\label{dataset}

In this section we describe in detail our methodology and implementation for evaluating AutoRobust.
This involves the dataset and binaries used, the dynamic analysis performed, the representation of that data, as well as the concrete transformations used on these representations.

\subsection{Dataset}
Unlike signature-based or rule-based detection, to build discriminative models for classifying between malicious and benign programs, we require samples from both.
The paradox being that while malware are fairly easy to find, goodware are much more difficult to systematically procure.
We source our goodware as they do in Dambra et al.~\cite{dambra2023decoding}: through the community-maintained packages of Chocolatey~\cite{Chocolately} they create a dataset that spans 2012 to 2020.
Regarding malware, VirusTotal~\cite{VirusTotal} provided us with a dataset of Portable Executable from 2017--2020 that they released for academic purposes.
Since we could not execute malware with network connectivity for safety and ethical concerns, we selected well-known representative families that would exhibit their malicious behavior under dynamic analysis even in the absence of connectivity: virlock, vobfus, and shipup.
We verified that these families are sufficiently different from each other with vhash (resulting in a maximum overlap of 2,5\%), while we preserve their distinct labels in order to evaluate also in a multi-class scenario, which was deemed of high interest by our industrial partner. 
In total, the dataset we managed to assemble contains 26,200 PE samples: 8,600 (33\%) goodware and 17,600 (67\%) malware.

All samples were executed in a series of Windows 7 VMware virtual machines, each with 4GB RAM, 4 vCPU, 128GB HDD, without network connectivity due to policy/ethical constraints.
These were orchestrated through CAPEv2 \cite{CAPEv2}, a maintained open-source successor of Cuckoo \cite{CUCKOO}. 
This framework is widely used for analyzing and detecting potentially malicious binaries by executing them in an isolated environment and observing their behavior without risking the security of the host system.
Every sample was executed for 150 seconds, an empirical lower bound on the time required to gather the full behavior of samples~\cite{kuchler2021does}.

As is considered best practice in the literature~\cite{maffia2021longitudinal}, we tried to mitigate evasive checks by running the VMwareCloak~\cite{VMwareCloak} script to remove well-known artifacts introduced by VMware.
Moreover, we populated the filesystem with documents and other common types of files, to resemble a legitimate desktop workstation that malware may identify as a valuable target~\cite{miramirkhani2017spotless}. 
The dynamic analysis output is a detailed report with information on the syscalls\footnote{Without loss of generality, we use Windows APIs, Windows Native APIs, system calls, and syscalls interchangeably.} invoked by the binary and all the relative flags and arguments, as well as all interactions with the file system, registry, network, and other key elements of the operating system.
Malware analysis and all other experiments ran on an infrastructure with 32 cores Intel(R) Xeon(R) Gold 6140 CPU @ 2.30GHz and 256GB of RAM.

\subsection{Representation}
\label{sec:representation}
While the reports generated by CAPEv2 analysis consist of detailed information on the execution behavior of the binary, they vary considerably in size and may contain redundant or uninformative parts. 
As input space for the HMIL models that we train, we use the summary part of these reports.
This summary is comprised of a concise description of all interactions the binary has with the system at the system call granularity level, and is composed of the following 13 categories:
\begin{itemize}
    \item \textbf{Files / Read Files / Write Files / Delete Files}, containing the sets of all, read, written, and deleted files during execution respectively.
    \item \textbf{Keys / Read Keys / Write Keys / Delete Keys}, containing the sets of all, read, written, and deleted registry keys during execution respectively.
    \item \textbf{Executed Commands}, being the set of commands executed.
    \item \textbf{Resolved APIs}, being the set of external API calls that have been resolved.
    \item \textbf{Mutexes}, being the set of mutex names that have been invoked.
    \item \textbf{Created Services / Started Services}, being the sets of service names that have been created and started respectively.
\end{itemize}

This representation is delivered in JSON format, which includes a distinct key for each category described, with the corresponding values consisting of unique entries.
We should note here that in the summary report of CAPEv2, in each category individual entries are listed only once.

\subsection{Transformations}
\label{sec:trans}
The explicit goal of our methodology is to identify counterfactuals on the input space of the model that indicate spurious correlations and brittle features that can be used for evasion, otherwise known as \textit{shortcuts}~\cite{geirhos2018generalisation}.
For example, our intuition indicates that a specific filename should be an irrelevant feature as it could be replaced by any other.
As we are interested in defending against realistic threats, we need to exhaustively determine all the perturbations on the feature representation $\mathcal{F}$ of a sample that can be performed by the modifications that are permissible in the problem space and on the original program~\cite{pierazzi2020intriguing}.

By permissible here we mean transformations that preserve the original functionality of the program.
Addition is allowed across all categories as it is possible to introduce new entries without affecting the malicious functionality.
Replacement is also feasible for all categories, with the exception of replacing APIs as it represents a potential risk for breaking the functionality.
We consider feature removal in all categories a non-viable transformation; it is theoretically feasible, but not without refactoring or rewriting the source code, something we deem out of our threat model and our scope in this work.
The set of transformations that are feasible through the problem-space is summarized in Table \ref{table:capabilities}.

All report modifications are performed in a consistent manner and properly reflected where necessary.
The structured report that represents the binary behavior contains interdependencies among the features categories that have to be respected when modifications take place; otherwise these modifications would result in a dynamic analysis report that is unrepresentative of an actual executing binary.
For instance, when modifying a Mutex for writing a file, used as an argument in the NtCreateFile syscall and followed by the execution of the written file, we also modify that file everywhere else in the report.
Similarly, changes made Files and Keys categories are reflected accordingly; for further details on how the feature-space transformations we employ are functionality-preserving in the problem-space, please refer to \autoref{app:transformations}.

While we have shown how the modifications we consider are a) functionality preserving, and b) representative of the full extent of adversarial capabilities $\mathcal{C}$ in the problem-space, we still lack a concise methodology for applying them.
Next to \textit{what} manner to modify with, we additionally have the questions of \textit{where} and \textit{how}; we therefore require a comprehensive modification policy that can perform all of the above concurrently.
We grant these adversarial capabilities to our \ac{RL} agent by defining a multidiscrete\footnote{https://www.gymlibrary.dev/api/spaces/} policy.
This is a policy that at each time step selects three actions simultaneously: one action for each of the three aforementioned discrete categories.
Each action represents a different choice on the modifications aspect of the JSON report: what to do, where to do it, and how to do it.
Concretely, the three action categories are composed of the following discrete options to choose from:
\begin{enumerate}
    \item \textbf{What}: Add, Edit, X-Edit
    \item \textbf{Where}: One of the 13 categories described in \ref{sec:representation}.
    \item \textbf{How}: Random String, English Vocabulary, Target Vocabulary, Random Choice.
\end{enumerate}

In the first category, \textbf{Add} always selects one of the entries from the target class, based on their frequency of occurrence.
\textbf{Edit} replaces the entries in the report section that the second action points to (where), while keeping an incremental pointer for each section to avoid overwriting already edited entries. 
\textbf{X-Edit} replaces the entry or entries that the explainer (\autoref{sec:explainer}) returns as the most important.
The \textbf{Where} action points to the report section that the previously selected action will be performed.

\begin{table}[t]
\centering
\begin{tabular}{l|c|c|c}
    \toprule
    Category & Addition & Replacement & Removal\\
    \midrule
    - /Write/Read/Delete Files & \cmark & \cmark & \xmark \\
    - /Write/Read/Delete Keys & \cmark & \cmark & \xmark \\
    Executed Commands & \cmark & \cmark & \xmark \\
    Resolved APIs & \cmark & \xmark & \xmark \\
    Mutexes & \cmark & \cmark & \xmark \\
    Started/Created Services & \cmark & \cmark & \xmark \\
    \bottomrule
\end{tabular}
\caption{Feasible transformations by feature category}
\label{table:capabilities}
\end{table}

As for the four manners of replacement: \textbf{Random String} constructs a string of ASCII characters selected at random, \textbf{English Vocabulary} selects an English word at random, \textbf{Target Vocabulary} selects word from the vocabulary of the target class weighted by its frequency, while \textbf{Random Choice} selects one of the three previous choices at random for each step of building a longer string, e.g. a filepath.
The composition of these three actions results in a very capable policy that can modify reports of arbitrarily large size, with the added benefit that any modifications will always result to viable reports, as these would be generated from the problem-space and actual binaries.
Ideally, one could investigate an even more granular modification policy, one that is able to modify reports on the character rather than the word or token level as we currently do.
The reports in our dataset are on average 70K characters long, something that renders learning a character level policy on such a context length computationally intractable.

\section{Adversarial Binaries}
\label{sebsec:binaries}

Our objective in this work is defensive: we are interested in generating adversarial malware in order to adversarially train the HMIL model.
Without loss of generality, we perform the functionality-preserving transformations in the feature space, while also introducing a proof of concept on how such adversarial variants of the malware can be constructed on the binary level.
We manually verified that modifications are consistent and functionality-preserving when performed on the feature space.

The modification policy of AutoRobust is constantly updated through interaction with the classifier.
This means that at each time step, we need to generate a new adversarial malware, then generate its dynamic analysis report, then classify it and repeat.
Besides the engineering effort required to do this in real-time, the computational overhead of this process would be inordinate.
The wall time formula in hours would thus be: $T = m * i * \frac{150}{3600}$, where $m$ is the modifications performed per iteration, $i$ is the number of iterations, and $150$ are the seconds that each binary is executed for.
In our evaluation $m = [50000, 200000]$ and $i = 15$, therefore $T = [31250, 125000]$.
Thus if performed on the problem space, a full adversarial training session would take on average 7 years to complete, on our equipment and research prototype.
In the feature space, as is currently performed, it takes on average 45 hours; a reduction by a factor of 1360.
In this context, the only feasible solution is to apply the transformations directly on the dynamic analysis reports, while preserving the problem space constraints and reducing the computational cost considerably.
AutoRobust then allows one to harden a malware classifier against evasion without incurring excessive computational costs.
We note here that whereas attackers \textit{must} generate realistic adversarial binaries, defenders do \textit{not} have to do that; rather, they merely have to generate feature spaces that correspond to realistic transformations.
This is an asymmetry that favors defenders as it makes adversarial training much more practical.

The problem-space modifications can occur either on the source code or the compiled binary.
For the former, as semantic equivalence is an undecidable problem, there are potentially unbounded ways to achieve the same functionality~\cite{baldan2021rice}; we therefore consider source code modifications out of scope.
Focusing on adversarial binaries instead, the capability we want to bestow to them is adaptive behavior, so that during their execution they can alter this behavior on demand, as it is observed by the dynamic analysis environment.
This can be achieved by the capacity to interrupt the functionality that their semantics dictate in order to interject an arbitrary number of randomly or deliberately selected API calls, or modify the arguments of the calls.
On Windows and the Portable Executable (PE) format this can be achieved via hooking the Import Address Table (IAT), which contains pointers to function addresses that can be called by the binary; the goal is to overwrite these addresses to point to adversary defined functions which execute the original plus additional functionality that they define.

Related work has demonstrated how to achieve adversarial behavior on the binary level~\cite{rosenberg2018generic, fadadu2019evading}.
The binary can packed with wrapper code that hooks all APIs and a configuration file that holds the modification logic---for AutoRobust that would be the adversarial policy $\pi_\theta^\mathcal{V}(a|s)$.
Alternatively, external calls to runtime libraries can be funneled through an indirect jump to the address specified in the IAT.
That jump points to the adversarial code consisted of the wrapped APIs, and each wrapped API calls the original API and before returning the return value, it applies the appropriate modifications.
In both cases to remain stealthy and preserve functionality, the additional calls should be provided with valid inputs and make sure there are no state changes on the memory and registers.
\section{Evaluation}
\label{sec:eval}

\begin{table*}[!ht]
    \centering
    \setlength{\tabcolsep}{0.3cm} 
    \caption{Performance of AutoRobust (AR), gradient-based (GB) adversarial training, and the original vanilla HMIL model considering the two perturbation budgets, 1K and 2K. The metrics displayed are: Score (of the origin class), Attack Success Rate (ASR), Clean Accuracy (C.Acc), and Robust Accuracy (R.Acc), all reported on the same hold-out test set.}
    \begin{tabular}{c|c|r|r|r|r|r|r|c}
    \toprule
    \multirow{2}{*}{\textbf{Model}} & \textbf{Adv.} & \multicolumn{2}{c|}{\textbf{Score}} & \multicolumn{2}{c|}{\textbf{ASR (\%)}} & \multicolumn{2}{c|}{\textbf{R.Acc (\%)}} & \textbf{C.Acc (\%)} \\
    & \textbf{Training} & \multicolumn{1}{c}{1K} & \multicolumn{1}{c|}{2K} & \multicolumn{1}{c}{1K} & \multicolumn{1}{c|}{2K} & \multicolumn{1}{c}{1K} & \multicolumn{1}{c|}{2K} & \\
    \midrule
    \multirow{3}{*}{Binary} & None & 0.31 & 0.23 & 73.8 & 80.2 & 26.2 & 19.8 & 99.0 \\
    & GB & 0.32 & 0.30 & 72.2 & 71.8 & 27.8 & 28.2 & 99.1 \\
    & AR & \textbf{1.00} & \textbf{0.98} & \textbf{0.0} & \textbf{1.6} & \textbf{100} & \textbf{98.4} & 99.1 \\
    \midrule
    \multirow{3}{*}{Multiclass} & None & 0.11 & 0.25 & 100 & 99.8 & 0.0 & 0.2 & 98.8 \\
    & GB & 0.12 & 0.12 & 99.5 & 99.8 & 0.5 & 0.2 & 98.8 \\
    & AR & \textbf{1.00} & \textbf{0.99} & \textbf{0.0} & \textbf{0.5} & \textbf{100} & \textbf{99.5} & 98.8 \\
    \bottomrule
    \end{tabular}
    \label{tbl:results}
\end{table*}

In this section, we elaborate on the practical details of our approach that aims to harden the \ac{HMIL} model against evasion, which has been trained on the reports produced by dynamic analysis.
To acquire adversarial malware to train with, we operate directly on these reports after having rigorously defined the permitted transformations that enforce all necessary constraints in the problem-space.

As elaborated in Section \ref{sec:threat}, we do not generate new variants of the binaries themselves, however we do test and confirm that these transformations are possible to perform by modifying the CAPEv2 monitor library, capemon \cite{capemon}.
Specifically, we modified its source code adding the logic to successfully perform the proposed transformations via intercepting the syscalls of interest, e.g. for files \textit{NtCreateFile}, \textit{NtOpenFile}, \textit{NtReadFile}, etc.
In the case of a substitution operation, we replace the argument of the intercepted syscall, while for addition we invoke new system calls, depending on the intended feature to add.
In the first scenario, in order to keep track of all the changes and do not affect the original functionality (further elaborated through an example in \autoref{app:transformations}) we use a shared memory structure that stores the mappings between original and modified entries.
Every time that a system call is invoked, we check if one of the arguments matches with the entries of our mapping and, if this is the case we subsequently propagate the modification.

\subsection{Evaluation Setup}
We assume an initial \ac{HMIL} model non-adversarially trained on dynamic analysis reports.
The explicit goal of the adversary then is to find the optimal policy that, given the current state of the report, will do those modifications that lead to evasion.
Optimal here is meant in a twofold way: in terms of a successful evasion \textit{and} in terms of the minimal amount of transformations performed.
This procedure is performed in an episodic manner, by first selecting a report belonging to the origin class and subsequently modifying it until it evades detection or the maximum number of modifications (the perturbation budget) is reached.
When this happens the \ac{RL} agent proceeds to a new report, and continues for a predefined total number of training steps; after that, all the resulting adversarial reports are appended to the adversarial dataset.

Subsequently, one epoch of adversarial retraining is performed by sampling from \textit{both} the original and the adversarial dataset.
The above steps constitute a \textit{single} iteration in the AutoRobust methodology, with 15 of them performed in total.
The schematically depict the full pipeline of our experimental evaluation approach in Figure \ref{fig:autorobust}, while more information on the HMIL model, the RL agents, and all hyperparameters are included in Appendix \ref{app:hyper}.

\subsection{Metrics}

To properly measure and assess the performance of our approach, we monitor 4 primary metrics:
\begin{itemize}
    \item \textbf{Score}: the probability that the HMIL model assigns to the \textit{origin} class at the end of each episode, averaged over all episodes.
    \item \textbf{ASR}: the Attack Success Rate is the percentage of episodes in one iteration that the malware successfully evades; $ASR = \frac{s}{n} \times 100\%$, with $s =$ \textbf{Score} $\leq 0.5$ and $n$ the total number of episodes. ASR is a very relevant metric to assess how capable a policy is in turning malware reports adversarial.
    \item \textbf{C.Acc}: the model accuracy on clean samples from the hold-out set, measured in \%.
    \item \textbf{R.Acc}: the model accuracy on adversarial samples from the hold-out set, that is \textit{after} retraining the model.
\end{itemize}

Between adversarial training iterations, the modification policy can widely vary, and thus the distribution of modifications and adversarial reports will vary accordingly.
Therefore, \textbf{R.Acc} \textit{is not} a true indication of the level of robustness: while \textbf{R.Acc} is indeed evaluated on the hold-out test set of adversarial examples, these are still generated from the same policy and will therefore come from the \textit{same} distribution.

The only way to acquire a representative level of robustness is through training a new adversarial policy from scratch, to properly assess if the model has any ``blindspots'' left.
This process is repeated until convergence; at its end, a final more extensive round of training is performed with a hold-out set, and the resulting \textbf{ASR} is the one reported.
As a baseline to compare our approach to, we additionally perform and evaluate \textit{gradient-based} adversarial training, considered as the state-of-the-art defensive approach for hardening models against evasion, irrespective of domain~\cite{madry2017towards}.

\subsection{Results}
In our experiments we evaluate the robustness of two types of classification models: one binary to decide between malware and goodware, and one multiclass to decide between malware family.
Our adversaries operate under 2 perturbation budgets, namely 1K and 2K, while each entry modification, be it addition or replacement, counts as one perturbation.
Given that the average size of a report in our dataset is $\sim$1K entries, these can be considered very generous constraints.
For the binary and multiclass settings, Table \ref{tbl:results} reports the performance of the original model as well as the two different kinds of adversarial training we evaluate: Gradient-Based (GB) and our approach AutoRobust (AR).
From our empirical results we can make the following observations:
\begin{itemize}
    \item[$\bullet$] While the initial non-adversarially trained model achieves excellent clean accuracy, it is excessively vulnerable to evasive attacks, often with as few as 10 modifications. This is a good indication that while the model initially picks up many useful discriminative features, these are brittle and decidedly not robust~\cite{ilyas2019adversarial}.
    \item[$\bullet$] Notably, gradient-based adversarial training shows little potential in defending against problem-space adversaries. This can be intuitively explained as perturbations based on the gradient of the model will frequently map to either invalid or uninformative areas of the feature-space that map to problem-space changes the adversary will be either unable or uninterested to do. In such domains it is therefore strictly preferable to generate adversarial examples using capabilities you want to defend against.
    \item[$\bullet$] During training we can observe that the ASR consistently drops over few iterations. After all 15 iterations are finished, the ASR in the hold-out test set is always 0, demonstrating that for this model and under these adversarial capabilities, it is conceivable to fully defend against the latter. 
    \item[$\bullet$] Furthermore, we confirm that our approach is robust to hyperparameter selection, which is described in Appendix \ref{app:hyper}. Fig. \ref{fig:training} plots the mean and variance of ASR, C.Acc, and R.Acc for the 15 iterations and over multiple runs with different hyperparameters. While both C.Acc and R.Acc consistently remain near 100\%, ASR gradually depletes to near 0.
    \item[$\bullet$] Surprisingly, adversarial training with GB and AR slightly \textit{increases} the C.Acc of the binary model. This might appear counter-intuitive at first, as robustness and performance are typically in trade-off; note however that we do 15 more iterations of training compared to the vanilla model, where adversarial training can also function as a form of regularization.
    \item[$\bullet$] Finally, the model's transition from being outright vulnerable to fully robust indicates that it is imperative to perform adversarial training in any security critical domain where adversarial agency should be considered a given.
\end{itemize}

\begin{figure*}
\centering
\begin{subfigure}{1\columnwidth}
  \centering
  \includegraphics[width=\columnwidth]{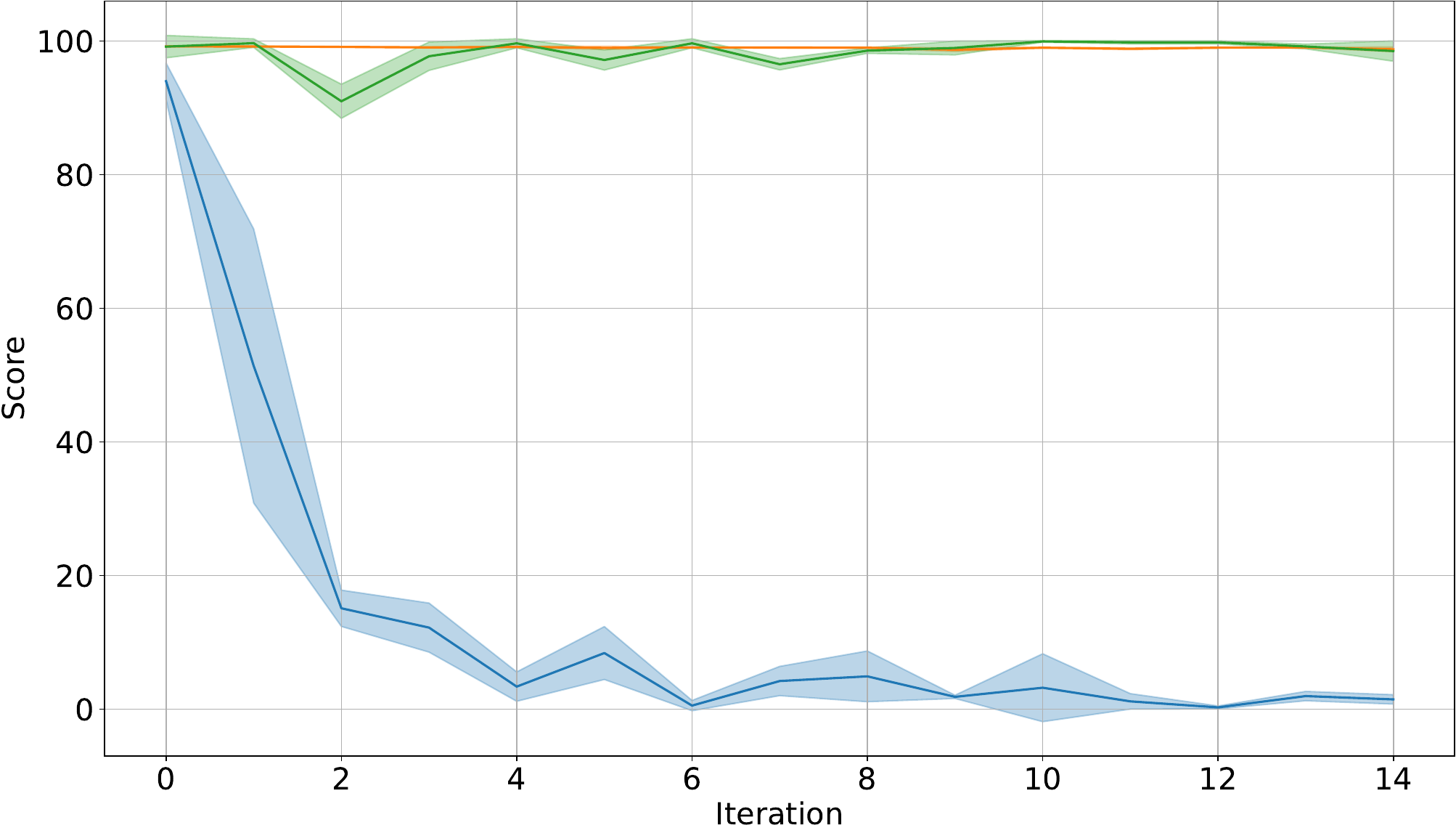}
  \caption{Binary classification}

\end{subfigure}
\begin{subfigure}{1\columnwidth}
  \centering
  \includegraphics[width=\columnwidth]{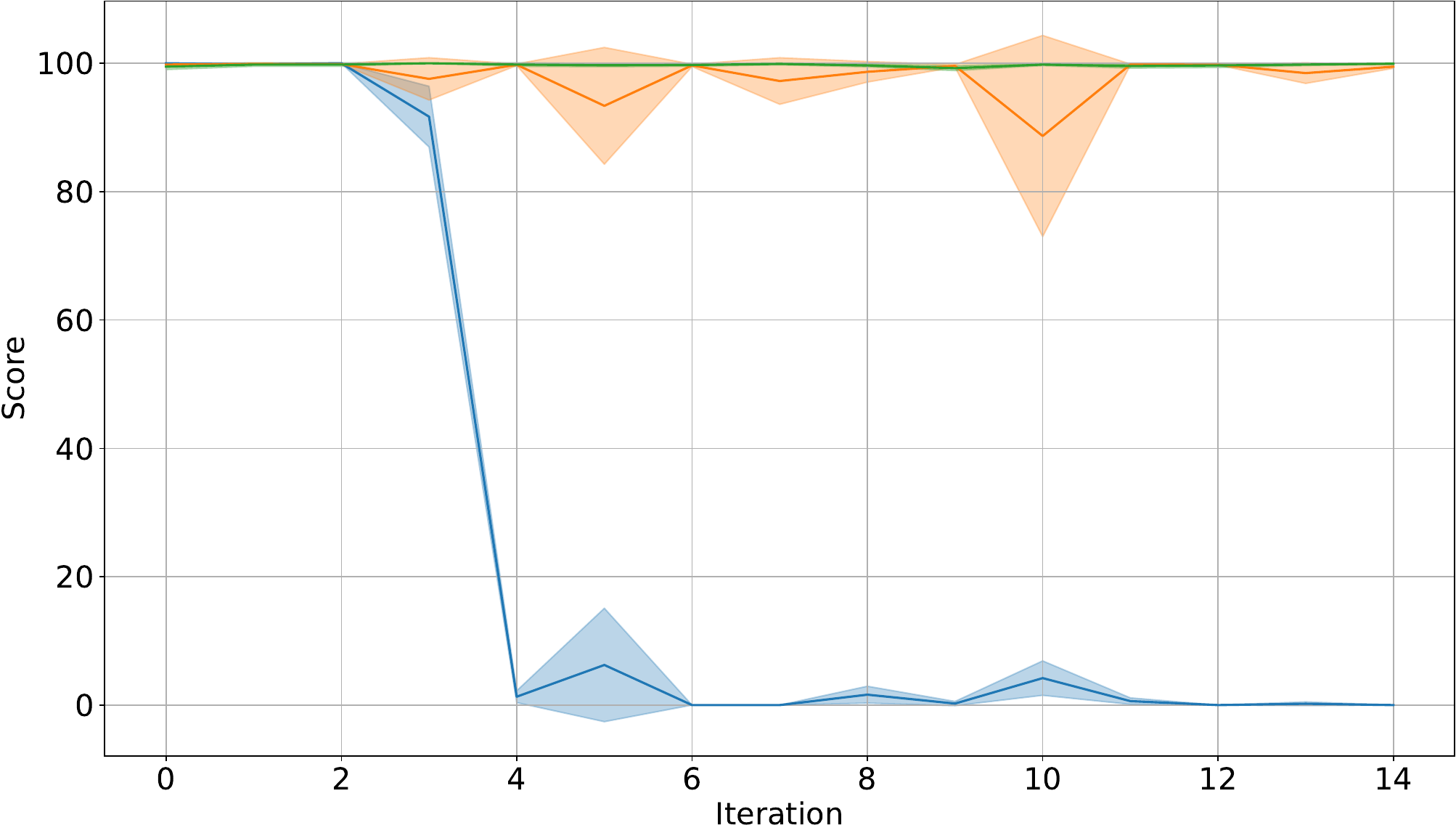}
  \caption{Multiclass classification}

\end{subfigure}
\caption{Progression of Attack Success Rate (blue), Clean Accuracy (yellow), and Robust Accuracy (green) over the 15 iterations of adversarial training. For each metric, the mean with one standard deviation are displayed, as we do multiple runs where hyperparameters and random seeds for selecting samples vary.}
\label{fig:training}
\end{figure*}

\begin{figure*}
\centering
\begin{subfigure}{1\columnwidth}
  \centering
  \includegraphics[width=\columnwidth]{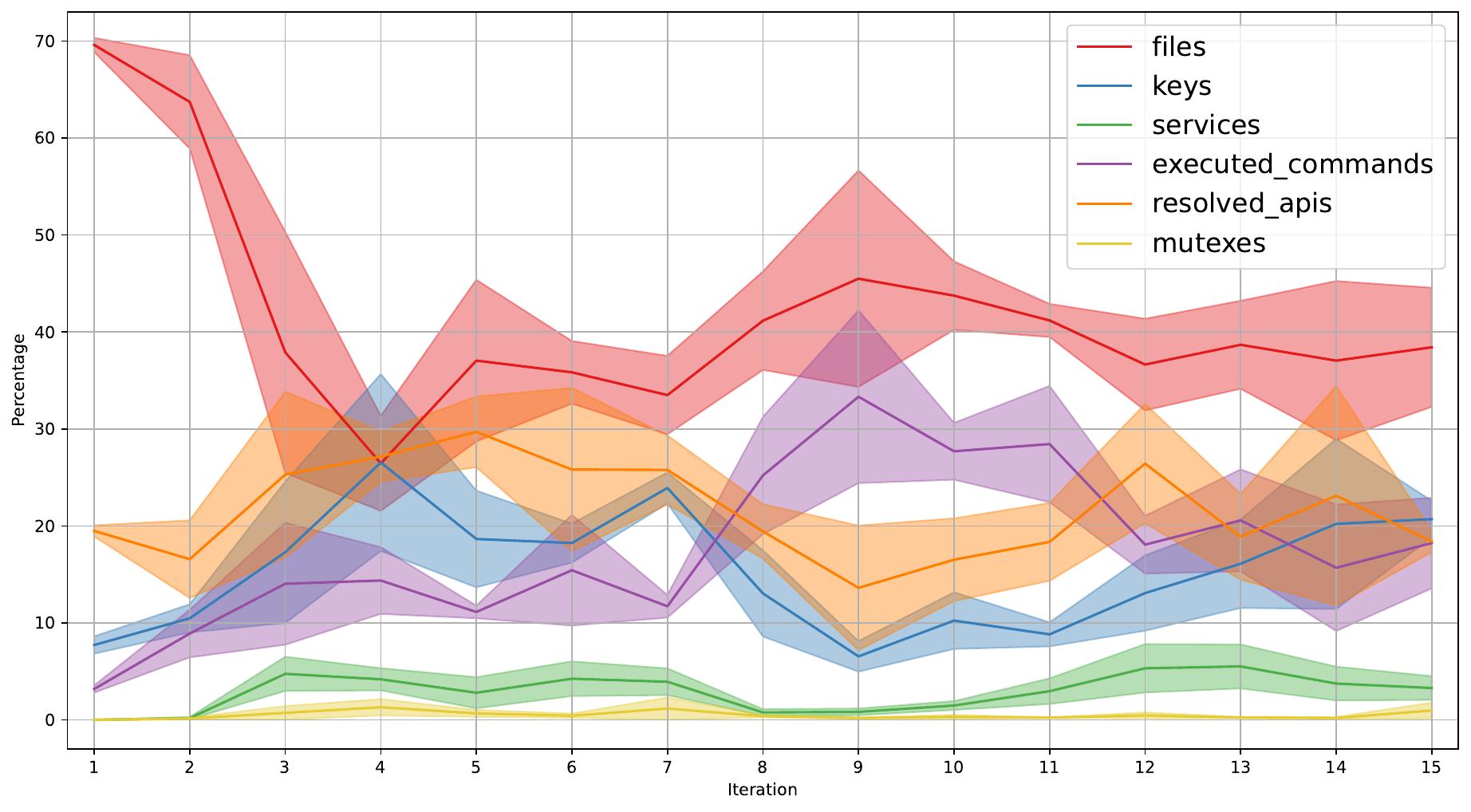}
  \caption{Explanations in binary evaluations.}

\end{subfigure}
\begin{subfigure}{1\columnwidth}
  \centering
  \includegraphics[width=\columnwidth]{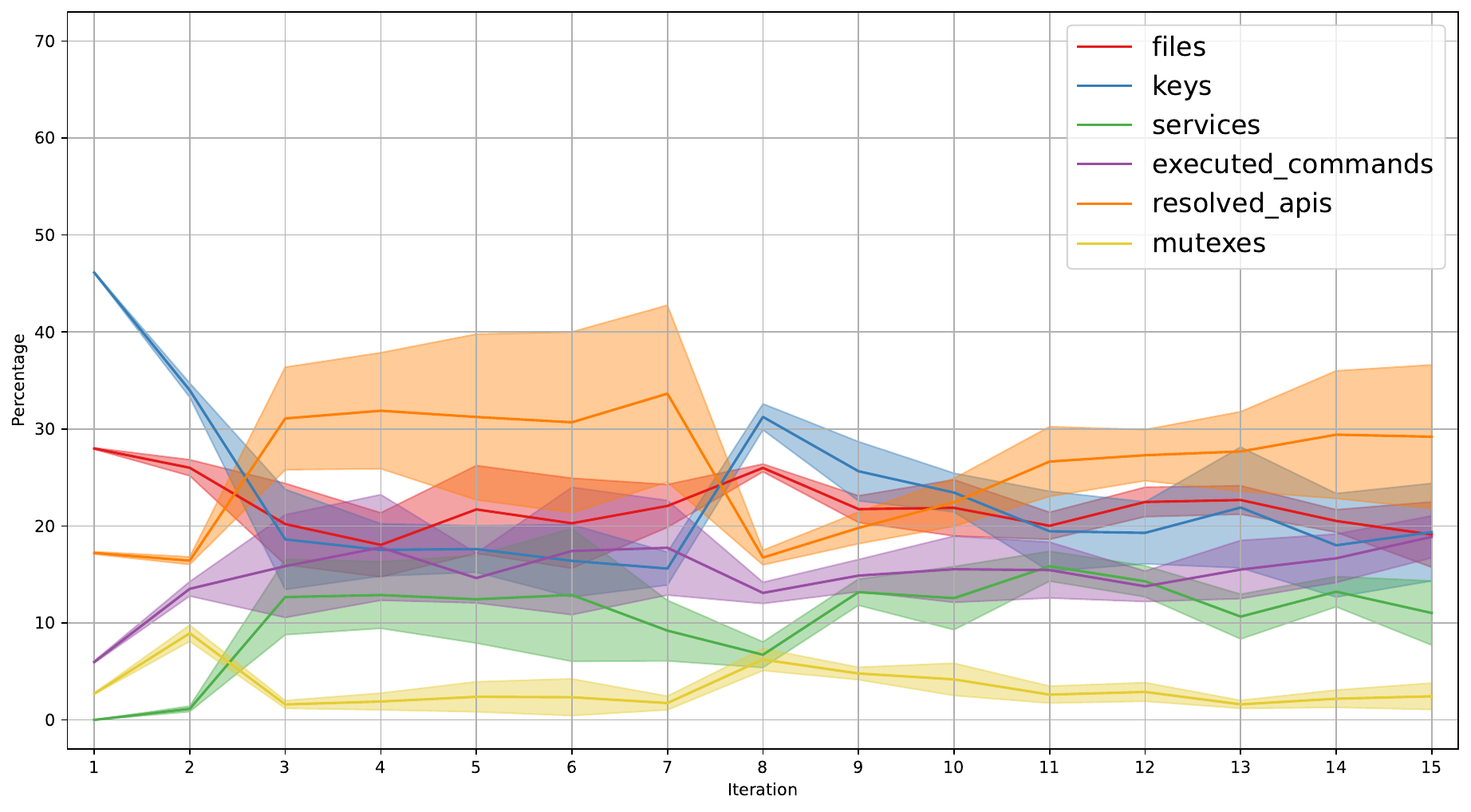}
  \caption{Explanations in multiclass evaluations.}

\end{subfigure}
\caption{The most frequent explanations as returned by the HMIL model explainer, over the 15 iterations. As for each iteration the number of episodes varies, we normalize explanations to a percentage over the whole iteration; mean values with one standard deviation are plotted.}
\label{fig:explanations}
\end{figure*}

\subsection{Explanations}
While deep learning obviates the laborious process of feature engineering and holds great promise in generalizing beyond its training data, we have to be judicious and inquisitive about what the model actually learns.
Spurious correlations are not only erroneous biases that a model can pick up when trained end-to-end on raw data, they additionally can be exploited by an attacker to evade the model.
It is thus crucial to remove such attack vectors at the time of training.
With AutoRobust we automate this demanding, human-in-the-loop process, by substituting the requirement for expert input on every potential shortcut feature, with asking a simple question: what are \textit{all} the feasible ways something could be different and still be the same thing?

To get a deeper understanding of how the feature importance for a model shifts as the latter is retrained, we collect the explanations within a single iteration -- that contains multiple episodes -- and evaluate how these explanations evolve over multiple iterations.
Figure \ref{fig:explanations} plots this progression.
We remind the reader that the explanations returned are not scored by importance, but instead consist of a discrete, variable size set with entries from the dynamic analysis report.

In the binary model, we observe that initially a disproportionate focus is put on files that later decreases, while registry keys and resolved APIs increase and remain important.
\textit{Mutexes} and \textit{services} represent a very small part of the explanations, unsurprisingly as they both have on average very few entries; what is surprising however is that \textit{executed commands} consistently rise in importance.
In the multiclass model we observe comparable patterns, with the notable exception that for adversarially retrained models \textit{resolved apis} are the most important explanations.
The latter is in line with our intuition that as the only category where entries cannot be replaced, it should be more informative when discriminating between malicious and benign binaries.

In conclusion, AutoRobust proves exceptionally effective in hardening a ML model against problem space attacks, and it can be viewed as an approach for first forcing and subsequently learning on the possible distribution shifts.
By changing in a report what could be different -- and \textit{only} what could \textit{realistically} be different -- without the label changing, it is a methodology for performing a counterfactual investigation on what is an essential feature and what is an artifact of the dataset.
In that way, AutoRobust shares similarities with (and inspiration from) dataset enhancement techniques, where transformations on the original dataset help the model generalise~\cite{cubuk2018autoaugment}.
\section{Discussion}

When considering hardening ML-based models and adversarial training in general, gradient-based approaches are state-of-practice.
In our work we demonstrate that while this might be true for specific domains, it is not necessarily a general principle; in the dynamic analysis based malware detection we consider in this work this appears to not be the case.
Instead, our novel approach has proven exceedingly effective in hardening a model and defending against problem-space attacks -- attacks that are pervasive and realistic in real-world settings -- while gradient-based adversarial training showed very limited effectiveness.
This does not come as a complete surprise and has an intuitive explanation: in domains where the input space is unconstrained in size, but otherwise structured and constrained in adversarial capabilities, it is better to adversarially train by utilizing the latter two.
One can think of our approach as confining the adversarial search within the manifold that corresponds to realizable programs.

While in the binary setting the performance metrics follow a well-defined trend, the multiclass case shows a few aberrations.
As these co-occur with a drop in clean accuracy, we can partly attribute them to the ratio of clean and adversarial examples used for retraining and catastrophic forgetting.
This issue became much less prominent when this ratio was tweaked, while subsequent iterations showed conclusively that the model adapted successfully and has the capacity to learn both clean and adversarial versions of reports.
Overall, our evaluation demonstrates that AutoRobust is capable of zeroing out the success rate of attacks and is robust to hyperparameter selection; remarkably, it slightly improves the performance on clean data too.

\textbf{Limitations}. Our intention in this work was to focus on a real-world setting, hence we collaborated with an industrial partner that provided the ML model used as a component of their widely-known commercial antivirus.
This makes our results very relevant for real-world settings, but naturally our evaluation is carried out on the \ac{HMIL} model.
We have however supplemented this empirical part with an extensive theoretical analysis that demonstrates the generality of our approach for any model or context that exhibits a problem/feature-space gap; AutoRobust is agnostic to the specific problem-space capabilities that are used to generate adversarial variants.

Another potential limitation is that due to stringent testing protocols, our malware analysis has been conducted without any network connectivity.
Due to our compliance with safety guidelines, it is plausible that some of the analyzed binaries displayed a narrower extent of their full behavior due to lack of connectivity.
While our methodology does not involve direct modification of the binaries themselves, our transformations preserve functionality by design.
Nonetheless, we have verified their correctness on a handful of adversarial executions traces.
Specifically, in the proof of concept we developed (cf. Appendix \ref{app:transformations}) we manually verify these transformations are correctly performed on binaries under dynamic analysis and do not break the malware behavior or functionality.
Finally, due to the computational cost on our research infrastructure and the transformations being functionality-preserving, the AutoRobust pipeline is executed on the feature space.
This naturally limits our capacity to address or modify aspects of binary behavior that lie outside of this specific feature representation.

\textbf{Future Work}. Our AutoRobust framework, together with our p-Robustness formulation [\ref{th:pspr}] are pertinent to any domain that has a distinct gap between problem and feature spaces~\cite{dyrmishi2023empirical}.
It is thus a very promising approach to apply to other real-world settings and environments.
Additionally, learning modification policies at the character level proved a considerable technical challenge, especially when some reports span millions of characters.
Thus in our work we focused on more coarse controls, which when composed are nevertheless very powerful.
In the future, a more granular policy could control modifications at the character level.
Large Language Models (LLMs) are good candidates as they can implicitly learn the structure of a valid report, while their generation policy can be adapted to generate adversarial examples through \ac{RL} feedback~\cite{christiano2017deep}.

\section{Conclusion}

With ML-based detection of Windows malware on dynamic analysis environments becoming more prominent,  significant efforts have been dedicated to assessing the resilience and effectiveness of the ML classification against adversarial malware, that is variants that can evade detection.
In this work, we justify theoretically and empirically why adversarial attacks on malware should be performed through the problem space, and introduce \textit{AutoRobust}, a novel methodology based on \ac{RL} for hardening detection models against malware evasion.
As the information contained in dynamic analysis reports will most certainly include artifacts, \textit{AutoRobust} automates the identification and removal of spurious correlations and brittle features in the input space of the model which can be used for evasion: namely to generate adversarial malware.
By optimizing with \ac{RL} the process of discovering counterfactuals to what a naively trained model has learned, we can guarantee a probabilistic level of robustness against well-defined adversaries.
Notably, given a ML-based malware detection module of a commercial antivirus, we show how vulnerable it can be to adversarial malware, as well as how \textit{AutoRobust} can grant near perfect robustness to the model \textit{without} adversely affecting and even improving its performance on clean inputs.

\begin{acks}
This research is partially funded and supported by: the Research Fund KU Leuven;   the Cybersecurity Research Program Flanders; the EU funded project KINAITICS (Grant Agreement Number 101070176); a Google ASPIRE research award; EPSRC Grant EP/X015971/1.
\end{acks}

\balance
\bibliographystyle{ACM-Reference-Format}
\bibliography{references}
\clearpage
\appendix

\section{Problem-space transformations}
\label{app:transformations}
In Section \ref{sec:eval} we presented the full range of functionality-preserving transformations on the specific feature space we are investigating in this work.
As generating new binary variants for every modification performed is a considerable engineering effort and time consuming process, we developed a proof of concept (PoC) that covers the basic usage of every Windows API intercepted by capemon.

\begin{itemize}[leftmargin=*]
    \item In this PoC we implement the same exact logic that AutoRobust applies in the feature space.
    \item We create a shared memory table that keeps track of all the name changes across program execution, so whenever one of the APIs of interest is called its argument was replaced accordingly as indicated by the modification.
    \item This entry is then added to the shared table as a stage 1 modification; if there are successive modifications we do track them all for completeness.
    \item From that moment on, whenever another API call is invoked, we do check if the original name is in one of the arguments and replace it according to our substitution history.
    \item Depending on the API we employ different strategies for argument substitution. As we do not want to impact the original program behavior, we have to conform to the viable transformations for each API. For instance, creating a mutex allows for completely arbitrary names, for file generation we have to provide valid filepaths, and in command execution we need to ensure all previous modifications are reflected without introducing new random entries which could make the program crash. We preserve this reasoning in the feature space, as illustrated in \autoref{app:poc_report}.
\end{itemize}

Adapting our PoC to runtime modifications could include the aforementioned logic in a dll wrapped with the binary, as discussed in Section \ref{sebsec:binaries}.
Under the specifics of dynamic analysis and our use-case, given access to the decisions of the model $\mathcal{M}$ attackers can generate adversarial binaries in the problem space with considerable automation, but with less granularity than AutoRobust: whatever modifications they make, they have to be packaged in batch to a new binary that will be subsequently analyzed by the sandbox.
As the entries in the summary of the dynamic analysis report are not ordered, the placement of functions in adversarial binaries does not matter; as long as they return to the main functionality of the program with its internal state unaffected, these functions are allowed to interact with each other.

\subsection{Functionality Preservation in AutoRobust}

Here we outline in detail the modification procedure, which keeps track of the changed entries, and how it ensures that all changes propagate correctly throughout a dynamic analysis report. In order to do it, after the agent has performed a modification on the report, we seek for other occurrences of the same entry in different categories. In case we do, we perform the same modification on it and we do this for all the dynamic report categories \autoref{sec:representation}. 
In that way the modifications performed will reflect both a viable report variant in the feature-space, and a binary with its functionality preserved in the problem-space.
For this detailed explanation, we consider a malicious program that wants to modify the firewall rules.
To do that, it needs to invoke \textbf{netsh} Windows utility, something we expect to appear in the category \textit{executed commands}: 

\begin{lstlisting}[language=bash]
netsh firewall add allowedprogram "C:\Users\John\AppData\Roaming\malicious.exe" "malicious.exe" ENABLE
\end{lstlisting}

Changing the command itself or its could inadvertently affect the original malware behavior. However, renaming the file from "malicious.exe" to another name, provided the new name is consistently referenced, would not affect the malware's functionality.
An initial modification could be:

\begin{lstlisting}[language=bash]
netsh firewall add allowedprogram "C:\Users\John\AppData\Roaming\hello.exe" "hello.exe" ENABLE
\end{lstlisting}

Following a modification on the executed commands, our transformation logic performs a sweep of the entire report to identify all places that it is referenced; subsequently, every occurrence of "malicious.exe" is substituted with "hello.exe", for example corresponding entries within both the \textit{Files} and \textit{Write Files} sections.
Such entries indicate not only the generation of the file but also the activities related to content modification associated with it and in order to be consistent we change those entries as well.
Additionally, our investigation unveils instances within the \textit{executed commands} section that indicate subsequent efforts to execute the binary after adjustments had been made to the firewall settings.
Our transformations ensure the modification of the filename in this case as well, in that way preserving the original functionality.

This example helps to demonstrate how modifications are propagated around the executed commands category. The same reasoning is applied to the performed modifications that need to be reflected in other categories; for example changes in \textit{Write Keys} are reflected the \textit{Keys} section.
Because we do not modify the binaries themselves, we can guarantee that for the granularity provided by our current feature space the transformations are coherent, working at the same abstraction level.

\subsection{Dynamic Analysis Reports}
\label{app:poc_report}
To illustrate the feature space we are working with, and also demonstrate how effortless evading the HMIL model initially is, we include an example dynamic analysis report as well as its adversarial variant.
\autoref{lst:example_report_or} shows the original report as generated through execution in our sandbox.

\begin{lstlisting}[language=json, label=lst:example_report_or, caption=Original report entries]
{
  "summary": {
    "files": [
      "C:\\Users\\John\\AppData\\Local\\Temp\\\\xb0\\xa1\\xb0\\xa1\\xb0\\xa1\\xb0\\xa1\\xb0\\xa1\\xb0\\xa1\\xb0\\xa1\\xb0\\xa1\\xb0\\xa1\\xb0\\xa1\\xb0\\xa1\\xb0\\xa1\\xb0\\xa1\\xb0\\xa1\\xb0\\xa1\\xb0\\xa1\\xb0\\xa1\\xb0\\xa1\\xb0\\xa1\\xb0\\xa1\\xb0\\xa1\\xb0\\xa1\\xb0\\xa1\\xb0\\xa1\\xb0\\xa1\\xb0\\xa1\\xb0\\xa1\\xb0\\xa1\\xb0\\xa1\\xb0\\xa1\\xb0\\xa1\\xb0\\xa1\\xb0\\xa1",
      "C:\\Users\\John\\AppData\\Local\\Temp\\03cb16b5aa21f4b0a10a8e3e.exe",
      "C:\\vbxjf.exe"
    ],
    "read_files": [
      "C:\\Users\\John\\AppData\\Local\\Temp\\03cb16b5aa21f4b0a10a8e3e.exe"
    ],
    "write_files": [
      "C:\\vbxjf.exe"
    ],
    "delete_files": [],
    "keys": [
      "DisableUserModeCallbackFilter",
      "HKEY_LOCAL_MACHINE\\Software\\Microsoft\\Windows NT\\CurrentVersion\\GRE_Initialize",
      "HKEY_LOCAL_MACHINE\\SOFTWARE\\MICROSOFT\\WINDOWS NT\\CURRENTVERSION\\GRE_Initialize\\DisableMetaFiles"
    ],
    "read_keys": [
      "DisableUserModeCallbackFilter",
      "HKEY_LOCAL_MACHINE\\SOFTWARE\\MICROSOFT\\WINDOWS NT\\CURRENTVERSION\\GRE_Initialize\\DisableMetaFiles"
    ],
    "write_keys": [],
    "delete_keys": [],
    "executed_commands": [
      "c:\\vbxjf.exe"
    ],
    "resolved_apis": [
       "kernel32.dll.VirtualAlloc",
      "kernel32.dll.VirtualProtect",
      "kernel32.dll.VirtualFree",
      "kernel32.dll.LoadLibraryA",
      "kernel32.dll.GetProcAddress",
      "kernel32.dll.ExitProcess",
      "msvcrt.dll.atoi",
      "shlwapi.dll.PathFileExistsA",
      "user32.dll.wsprintfA",
      "kernel32.dll.OpenProcess",
      "kernel32.dll.VirtualAllocEx",
      "kernel32.dll.WriteProcessMemory",
      "kernel32.dll.WaitForSingleObject",
      "kernel32.dll.VirtualFreeEx",
      "kernel32.dll.GetProcessHeap",
      "kernel32.dll.GetModuleHandleA",
      "kernel32.dll.HeapAlloc",
      "kernel32.dll.HeapFree",
      "kernel32.dll.IsBadReadPtr",
      "kernel32.dll.DeleteFileA",
      "kernel32.dll.GetModuleFileNameA",
      "kernel32.dll.CloseHandle",
      "kernel32.dll.ReadFile",
      "kernel32.dll.GetFileSize",
      "kernel32.dll.CreateFileA",
      "kernel32.dll.WriteFile",
      "kernel32.dll.CreateProcessA",
      "kernel32.dll.GetStartupInfoA",
      "kernel32.dll.Sleep",
      "kernel32.dll.FreeLibrary",
      "msvcrt.dll.strchr",
      "msvcrt.dll._CIfmod",
      "user32.dll.TranslateMessage",
      "user32.dll.DispatchMessageA",
      "user32.dll.GetMessageA",
      "user32.dll.MessageBoxA",
      "user32.dll.PeekMessageA",
      "user32.dll.GetWindowThreadProcessId"
    ],
    "mutexes": [],
    "created_services": [],
    "started_services": []
  }
}
\end{lstlisting}

On the other hand, \autoref{lst:example_report_adv} shows the same report after enough transformations have been applied so that it has successfully evaded the target classifier, i.e. classified as goodware.
For simplicity, we report only the entries that have been modified.
We can observe that our transformations change consistently the name of the files and the directory paths in which these are saved, propagating also the modification to \textit{executed commands} section.
Then, our \ac{RL} agent also adds a mutex, which has no impact on the actual behavior of the original binary.

\begin{lstlisting}[language=json,label=lst:example_report_adv, caption=Modified report entries]
{
  "summary": {
    "files": [
      "C:\\Users\\John\\AppData\\Local\\Saxon\\Temp\\sorage\\shrinal\\(*@\linebreak@*)podostemad\\hired\\oxyphenyl\\demonocracy\\Lapsana\\Kamasin\\(*@\linebreak@*)menoschesis\\spinnerular\\symptomless\\unknelled\\Cladocera\\lame\\(*@\linebreak@*)suspensively\\uncomparably\\Utopianize\\demigroat\\ovistic\\equalize\\(*@\linebreak@*)lounger\\staghunt\\Cascadia\\head\\overrule\\chichimecan\\(*@\linebreak@*)Edestosaurus\\berrigan\\booger\\cinurous\\unhurt\\gollar\\uraniid\\(*@\linebreak@*)blousing\\lunge\\recrease\\rage\\limy\\predefense\\spadger\\(*@\linebreak@*)Wykehamist\\taxidermist\\outcrier\\patchwork\\specular\\huffishness\\(*@\linebreak@*)tarnal\\cabin\\tenementary\\rectalgia\\cataplasis\\flavor\\(*@\linebreak@*)sprank\\rigor\\unwrinkle\\partitionary\\dancery\\demihigh\\(*@\linebreak@*)warrantee\\sherifate\\thereout\\fourteenthly\\mousefish\\(*@\linebreak@*)unfairylike\\tenaille\\iodobromite\\octoglot\\Cartist\\(*@\linebreak@*)heartscald\\wellsite\\triphony\\picturely\\zoning\\deal\\pyelectasis\\(*@\linebreak@*)unselfish\\marshite\\truckful\\beworn\\thriller\\divaricately\\Earnie",
      "C:\\Users\\horseback\\zeuglodont\\supping\\aphasic\\institor.exe",
      "C:\\surgeproof\\marionette.exe"
      
    ],
    "read_files": [
      "C:\\Users\\horseback\\zeuglodont\\supping\\aphasic\\institor.exe"
    ],
    "write_files": [
      "C:\\surgeproof\\marionette.exe"
    ],
    "delete_files": [],
    "keys": [
      ...
    ],
    "read_keys": [
      ...
    ],
    "write_keys": [],
    "delete_keys": [],
    "executed_commands": [
      C:\\surgeproof\\marionette.exe}
    ],
    "resolved_apis": [
      ...
    ],
    ("mutexes": ["Asjk"],
    "created_services": [],
    "started_services": []
  }
}
\end{lstlisting}

\section{HMIL \& RL Hyperparameters}
\label{app:hyper}

The model is retrained for a total of 15 iterations on the new adversarial examples generated, where in each iteration agents are trained from scratch.
We cold-start new agents because agents from previous iterations often failed to evade the retrained model; this strongly indicates that knowing the capability set $\mathcal{C}$ is not sufficient to defend, as multiple different policies -- and thus multiple different distributions -- can result from it.
We emphasize that robust accuracy is not a true indication of actual robustness; while evaluated on a test set, this comes from the \textit{same} distribution due to being generated by the same agent.
The realistic level of robustness is evaluated only through a final round of attacks, with starting reports neither the model nor the agent have seen before.
As is best practice in \ac{AML}, candidate reports are only those that are correctly classified by the model.

\textbf{States \& Rewards}. For all agents, the state representation $s$ for a sample $x$ is its 32-dimensional embedding on the last layer before the fully connected of \ac{HMIL} model. 
The reward functions we evaluate with are straightforward: we reward the score decrease in the source class and its increase in the target class, while optionally penalizing the total number of steps and any disallowed moves -- which if selected, are never allowed to go through.
Table \ref{tbl:hyper} reports on all hyperparameters -- or ranges -- that we experimented with.
Throughout our experiments, we observe that results are robust to hyperparameter selection.

\begin{table}[h!]
\centering
\renewcommand*{\arraystretch}{1.05}
\caption{Hyperparameters.}
\begin{tabular}{|r|r|r|}
\toprule
\textbf{Hyperparameter} &\bf Binary &\bf Multiclass\\
\midrule
learning rate & e-3 -- 3e-3 & e-3 -- 3e-3 \\
steps per episode & 1000 -- 2000 & 1000 -- 2000 \\
steps per iteration & 5e4 -- 3e5 & 2e4 -- 2e5\\
iterations & 15 & 15\\
batch size RL& 32 & 32\\
batch size HMIL& 128 & 128\\
\bottomrule
\end{tabular}
\label{tbl:hyper}
\end{table}

\end{document}